\tikzstyle{plant} = [draw, fill=red!5, rounded rectangle, 
\tikzstyle{block} = [draw, fill=blue!5, rectangle, 
\tikzstyle{tap} = [draw, fill=red!5, rectangle, minimum height=3em]
\tikzstyle{sum} = [draw, fill=yellow!10, circle, node distance=1cm]
\tikzstyle{pinstyle} = [pin edge={to-,thick,black}]
\tikzstyle{BitPipe} = [thick, decoration={markings,mark=at position
\tikzstyle{FIFO} = [rectangle split, rectangle split parts=3, draw, rectangle split horizontal,minimum height=3em,text height=1.5em,text depth=1em,on chain,inner ysep=0pt]
\newtheorem{lemma}{Lemma}
\theoremstyle{definition}
\newtheorem*{defn*}{Definition}
\newtheorem*{scheme*}{Scheme}
\theoremstyle{remark}
\newtheorem{remark}{Remark}
\providecommand{\secref}[1]{Sec.~\ref{#1}}
\providecommand{\figref}[1]{Fig.~\ref{#1}}
\newcommand{\ie}{i.e.}
\newcommand{\eg}{e.g.}
\newcommand{\etal}{et al.}
\newcommand{\cf}{cf.}
\newcommand{\nats}{\mathbb{N}}
\newcommand{\e}{\mathrm{e}}
\newcommand{\cX}{\mathcal{X}}
\newcommand{\cI}{\mathcal{I}}
\newcommand{\cE}{\mathcal{E}}
\newcommand{\Comment}[1]{}
\newcommand{\old}[1]{}
\newcommand{\rem}[1]{}
\newcommand{\oJ}{\bar{J}}
\providecommand{\e}{{\rm e}}
\providecommand{\comment}[1]{}
\newcommand{\beqn}[1]{\begin{eqnarray}\label{#1}}
\newcommand{\eeqn}{\end{eqnarray}}
\newcommand{\beq}[1]{\begin{equation}\label{#1}}
\newcommand{\eeq}{\end{equation}}
\providecommand{\var}[1]{{\mathrm{Var}\left( #1 \right)}}
\newcommand{\vast}{\bBigg@{4}}
\newcommand{\Vast}{\bBigg@{5}}
\providecommand{\E}[1]{\mathbb{E} \left[ #1 \right]}
\providecommand{\first}[2]{#1}
\def\rchannel{r_\mathrm{ch}}
\def\rcode{r_\mathrm{code}}
\begin{document}

\title{Real-Time Variable-to-Fixed Lossless Source Coding of Randomly Arriving Symbols}

\author{Uri Abend and Anatoly Khina
	\thanks{This research was supported by the \textsc{Israel Science Foundation} (grant No.\ 2077/20).
	The work of U.~Abend was supported by the Yitzhak and Chaya Weinstein Research Institute for Signal Processing and by the Magbit Foundation of Greater Los Angeles.
    The work of A.~Khina was supported by the 5GWIN Consortium through the Israel Ministry of Economy and Industry.}
    \thanks{The authors are with the Department of Electrical Engineering--Systems, Tel Aviv University, Tel Aviv 6997801, Israel (e-mails: \texttt{uribinyamina@mail.tau.ac.il,anatolyk@eng.tau.ac.il}).}
}

\maketitle
\begin{abstract}
We address the recently suggested problem of causal lossless coding of a randomly arriving source samples. 
We construct variable-to-fixed coding schemes and show that they outperform the previously considered fixed-to-variable schemes when traffic is high both in terms of delay and Age of Information by appealing to tools from queueing theory. 
We supplement our theoretical bounds with numerical simulations.
\end{abstract}

\first{
\begin{IEEEkeywords}
    Age of Information,
    delay,
    lossless source coding,
    queueing theory.
\end{IEEEkeywords}
}{}

\allowdisplaybreaks


\section{Introduction}
\label{s:intro}

Real-time communications has become a topic of growing interest in recent years due to rising demand in applications such as vehicular communications \cite{papadimitratos2009vehicular}, telemedicine \cite{bodner2004early} and satellite control \cite{lovera2002periodic}. Under this regime, the objective is to convey messages in a timely fashion, in contrast to classic communications where relatively long delays may be tolerated.

Two figures of merit have been suggested to measure the performance of real-time communication systems: \textit{Age of Information} (AoI) \cite{mayekar2020optimal,kaul2012real,kosta2017age} and \textit{delay} \cite{larmore1989minimum,mayekar2020optimal}.
The AoI of a system quantifies the freshness of the data at the receiver by assigning an ``age" to the most recent update of the system and measuring the time difference between updates. Delay, on the other hand, quantifies the timeliness of a system by measuring the elapsed end-to-end time difference from a symbol arrival to the encoder to its decoding at the decoder. These two quantities are similar in nature and indeed in some schemes \cite{zhong2016timeliness,ZhongYatesSoljanin:Huffman_for_AoI:ITW2018}, they can be simultaneously minimized.

In this work, we concentrate on the recently suggested setting of lossless transmission of causal source samples with random arrival times \cite{ZhongYatesSoljanin:Huffman_for_AoI:ITW2018,CausalPM:ISIT2019}. This setting is simple to formulate and can apply to many different settings, e.g., a multi-user scenario where data is transmitted through a single channel.

So far, the majority of works dealing with real-time lossless source coding, have focused on one-to-variable encoding, following the work of Larmore \cite{larmore1989minimum}. Additionally, for deterministic (periodic) arrivals, some work has been done on the more general case of block-to-variable coding \cite{zhong2016timeliness}. These coding schemes, while effective in low-traffic scenarios, exhibit large delays as the traffic increases, and the waiting time becomes dominant. However, in the case of deterministic arrivals, a variable-to-variable scheme in which the block size is set on the fly according to the number of symbols waiting to be coded, can be used to facilitate lower latency as was suggested in \cite{zhong2017backlog}.

To alleviate this problem, we propose in this work a variable-to-fixed coding scheme.
The motivation for variable-to-fixed codes stems from two main reasons: First, a work in queueing theory by Rogozin \cite{rogozin1966some} states that, under some restrictions, the waiting time of a queue is minimized by a deterministic service time. While this does not prove that variable-to-fixed codes are optimal, it suggests that they might perform well when the waiting time is the dominant component of the AoI and the delay, \ie, in high-traffic scenarios. Secondly, fixed-length codes synergize well with error correcting codes when real-time transmission is carried over noisy channels.

The rest of the paper is organized as follows. \secref{s:model} presents the system model and objectives. \secref{s:background} provides necessary background from queueing theory. In \secref{s:AoI-V2F}, we provide an analysis of the delay for variable-to-fixed codes, with \secref{s:waiting time} dedicated to an analysis of the waiting time. \secref{s:numeric} contains some simulation results. We conclude with a summary of the work and suggestions for future work in \secref{s:summary}.


\section{Problem Setup}
\label{s:model}

We formulate here the real-time source coding setup that will be treated in this work, depicted in \figref{fig: block diagram}. Operation of the system is set to begin at time 0.

\begin{figure}[t]
    \resizebox{\columnwidth}{!}{\begin{tikzpicture}[auto, arrow/.style={very thick, ->, >=stealth'},start chain=going right,>=latex,node distance=.13\columnwidth,>=latex']
    \node[coordinate] (input) {};
    \node[block, right of = input, node distance = .3 \columnwidth] (enc) {Encoder};
    \node[FIFO, right=of enc] (fifo) {FIFO};
    \draw[yellow!10,very thick] (fifo.south west) -- (fifo.north west);
     \node[draw,circle,on chain,minimum size=1.3cm, right=0cm of fifo] (server) {Server};
     \node[block, right=of server] (dec) {Decoder};
    \node[coordinate, node distance = .2\columnwidth, right=of dec] (output) {};
    
    \draw[arrow] (input) -- node [tap, pos=0.1, scale=0.5, minimum width= .4cm, above] {}  node [pos=0.1, below] {\footnotesize{ $A_3$}} node[pos=.1, above, minimum height = 14mm] {\footnotesize $X_3$}  
    node [tap, pos=0.5, scale = 0.6, minimum width = 0.32cm, above] {} node [pos=0.5, below] {\footnotesize{ $A_2$}} node[pos=.5, above, minimum height = 16.5mm] {\footnotesize $X_2$}
    node [tap, pos=0.72, scale=0.4, minimum width = .485cm, above] {} node [pos=0.72, below] {\footnotesize{ $A_1$}} node[pos=.72, above, minimum height = 12mm] {\footnotesize $X_1$}
    node[above, pos=.7]{\begin{tabular}{l}
          Randomly arriving
       \\ source symbols
          \\ \\ \\
     \end{tabular}}
    (enc);
    \draw[BitPipe] (enc) -- node [below] {$\rcode$} node [above] {..001..} (fifo);
    \draw[BitPipe] (server) -- node [below] {$\rchannel$} node [above] {..010..} (dec);
    \draw[arrow] (dec) -- 
    node [tap, pos=0.2, scale=0.6, minimum width= .32cm, above] {}  node [pos=0.2, below] {\footnotesize{ $R_2$}}
    node[pos=.2, above, minimum height = 16.5mm]{\footnotesize $X_2$} 
    node [tap, pos=0.75, scale=0.4, minimum width = .485cm, above] {} node [pos=0.75, below] {\footnotesize{ $R_1$}} node[pos=.75, above, minimum height = 12mm] {\footnotesize $X_1$}
    node [above, pos=0.3] {\begin{tabular}{l} Causally-decoded \\ source symbols \\ \\ \\ \end{tabular}} (output);

    \begin{pgfonlayer}{background}
        \node[fill = yellow!10, draw, dashed, rounded corners, fit = (fifo) (server)] (channel) {};
    \end{pgfonlayer}
    \node[above of = channel]{Channel};
\end{tikzpicture}}
    \caption{Block diagram of the system model.} 
    \label{fig: block diagram}
\end{figure}
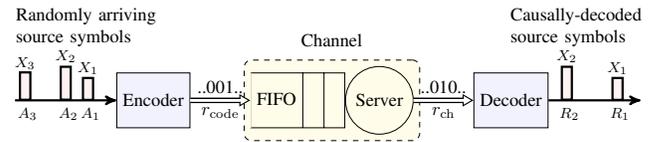

\textit{Source}:
At each (positive) time step, the source generates a new sample with known probability $q \in (0,1)$. 
The generation process samples across time are assumed independent and identically distributed (i.i.d.). 
We define the $n$-th arrival time $A_n$ as the generation time of the $n$-th symbol.
We further define the $n$-th time difference between arrivals by 
\begin{align}
    D_n = A_n - A_{n-1},
\end{align}
with $A_0$ set to $0$.
Clearly, $\{D_n\}_{n=1}^\infty$ are i.i.d.\ according to a geometric distribution with parameter $q$.

The sample values, $\{X_n\}_{n=1}^\infty$, are i.i.d.\ and are drawn from a finite alphabet $\cX$ according to a known probability mass function (p.m.f.) $p_X$. 

\textit{Encoder}: 
The memoryless encoder accumulates $B_i \in \nats$ samples and maps them into $L_i \in \nats$ bits using a prefix-free one-to-one mapping ($i$ being the block index):
\begin{align}
    \cE: \cI \subseteq \cX^* &\to \{0,1\}^*, 
\end{align}
such that $\cX^* \subseteq \bigtimes_{i=1}^\infty \cI$, \ie, any stream of source symbols is mapped to a stream of code bits.
The encoder then sends the $L_i$-length codeword over the channel. Note that both $B_i$ and $L_i$ are finite with probability 1 and that  $\{B_i\}_{i=1}^\infty$ and $\{L_i\}_{i=1}^\infty$ have i.i.d.\ samples.
The average rate of the memoryless encoder (average number of coded bits per symbol) is given~by 
\begin{align}
    \rcode = \frac{\E{L}}{\E{B}},
\label{eq:code-rate}
\end{align}
where $B$ and $L$ are general samples of their respective processes.

It is customary to divide the class of lossless encoders into three families:
\begin{itemize}
\item  
    \textit{Variable-to-fixed.} This is the encoding scheme used in this work. The output of this encoder is of fixed length $L_i \equiv \ell$. 
    The best known representative of this family is the Tunstall code \cite{TunstallPhD}, featuring a memoryless encoder which minimizes $\E{B}$ given $\ell$;
    this code will be discussed and used in later sections.
    
\item 
    \textit{Fixed-to-variable.} This family of codes was considered previously for the problem at hand in \cite{zhong2016timeliness,ZhongYatesSoljanin:Huffman_for_AoI:ITW2018,zhong2017backlog}. The input length of this encoder is of fixed length $B_i \equiv b$. 
    The best known representative of this family is the Huffman code \cite[Ch.~5.6]{CoverBook2Edition}, featuring a memoryless encoder which minimizes $\E{L}$ given $b$.

\item 
    \textit{Variable-to-variable.} This is the most general family in which both the input and the output lengths are not fixed and depend on the encoded sequence.
    The most famous code in this family is the arithmetic code \cite[Ch~13.3]{CoverBook2Edition}. Note that an encoder employing this code does not fall under the definitions given here.
\end{itemize}

\textit{Channel}: 
The channel is modeled by a first-in, first-out (FIFO) queue, which receives codewords from the encoder and outputs them to the decoder at a fixed known rate $\rchannel \in \mathbb{R}$. From a queueing perspective, we can view the codewords as clients entering the queue. This will be analyzed 
in \secref{s:AoI-V2F}.
\begin{remark}
    In general, the rate of emission from the FIFO may be any positive real number. This is the case for asynchronous FIFOs which have different input and output clock domains.
\end{remark}

\textit{Decoder}: Once $L_i$ bits are received from the FIFO, they are mapped back to $B_i$ symbols, applying the inverse of $\cE$
\begin{align}
    \cE^{-1}: \{0 ,1\}^* \to \cX^*.
\end{align}
Because the channel is noiseless and the encoder is prefix-free and lossless, we are guaranteed perfect reconstruction.
The time of decoding of the $n$-th symbol is denoted $R_n$. Note that the decoding time is shared between $B_i$ symbols. We shall denote the shared decoding time by $R_i$.
\begin{remark}
    Throughout this work we assume that the decoder is privy to the FIFO's status, e.g., via a side-channel or a special symbol reserved for this scenario, as is common in FIFO architecture.
\end{remark}

\textit{Delay}: The delay of the $n$-th symbol, $\Delta_n =  R_n-A_n$, is defined as the difference between its decoding time and arrival time.
The mean delay of the system can then be defined by averaging over all symbol delays
\begin{align}
    \bar{\Delta} = \lim_{n \to \infty}\frac{1}{n}\sum_{k=1}^n \Delta_k.
\label{eq:mean-delay}
\end{align}

\textit{AoI:}
The (peak) AoI of block $i$, $\Gamma_i = R_i - A_{i-1}$, is defined as the time difference between said decoding time of block $i$ and the arrival of the most recent symbol in block $i-1$. \footnote{Some works, e.g.,  \cite{zhong2016timeliness,zhong2017backlog}, study the average AoI. Both peak and average AoI exhibit similar behavior and we therefore treat only the former in the interest of space.}
The mean peak AoI is then given by
\begin{align}
    \bar{\Gamma} = \lim_{i \to \infty}\frac{1}{i}\sum_{k=1}^i \Gamma_k.
\label{eq:mean-aoi}
\end{align}
We refer the reader to \cite{ZhongYatesSoljanin:Huffman_for_AoI:ITW2018} for a graphical illustration of AoI. 

\textit{Objectives:}
The objective of this encoding scheme is to minimize either the delay or the AoI. 
Note that for the case of fixed-to-variable encoding, the two quantities are simultaneously minimized.


\section{Background: Results from Queueing Theory}
\label{s:background}

We now introduce some notations and known results from queueing theory to be used later in our analysis. We define the inter-arrival time $T_i$ as the time passing between the arrival of client $(i-1)$ and client $i$, and the service time $S_i$---as the amount of time client $i$ spends inside the queue. 

\begin{remark}
    The definitions given here hold for any distributions of $T_i$ and $S_i$ with finite means \cite{lindley1952theory}. 
    For example, $T_i$ may receive only integer values and $S_i$ may receive values that are multiples of $1/\rchannel$ as described in \secref{s:model}.
\end{remark}

We are interested in results for the GI/GI/1 queue \cite{kingman1962some}, \ie, the setting of a single server, and independent i.i.d.\ processes $\{T_i\}$ and $\{S_i\}$ with known (possibly different) distributions. General samples of theses processes are denoted by $T$ and $S$.

We now present some known results for the GI/GI/1 queue waiting time, $W_i$, which is defined as the amount of time that client $i$ waits in order to enter the queue. 

To that end, first define $U_i$ as the difference between the service and arrival times of client $i$
\begin{align}
\label{eq:def:U}
    U_i \triangleq S_i-T_i.
\end{align}
Note that $U_i$ is also an i.i.d.\ process; we denote a general sample from the process as $U$.
The waiting time of the $i$-th client, $W_i$, is given by
\begin{align}
    W_i = \max(0, W_{i-1} + U_{i-1}),
\label{eq:def:waiting_time}
\end{align}
with $W_1$ set to $0$.
A queue is said to be \textit{stable} if the waiting time series $\{W_i\}_{i=1}^\infty$ is bounded with probability 1.
It was shown in \cite{kingman1962some} that a queue is stable iff\footnote{\cite{kingman1962some} showed that the queue is stable if \eqref{eq:queue_stability} holds. The other direction is trivial by noting that $W_i \geq \sum_{k=0}^{i-1} U_k$ by \eqref{eq:def:waiting_time}.}
\begin{align}
    \E{U} < 0.
    \label{eq:queue_stability}
\end{align}
Moreover,
if the queue is stable, then the waiting time series tends to a random variable, $W$, that is bounded with probability 1. Finding the mean value of $W$ requires solving integral equations that generally have no analytic solutions \cite{kingman1962some}. To overcome this, several upper bounds for $\E{W}$ have been derived in the literature.
We now present two of them, which will be used throughout the paper. The first bound, which will be refererd to as the \textit{low-moment bound}, is given by \cite{kingman1962some}
\begin{align}
    \E{W} \leq \frac{\var{U}}{-\E{U}}.
    \label{eq:kingman-bound:basic}
\end{align}
To derive the second bound, we start by defining the moment generating function (m.g.f.) of $U$:
\begin{align}
    \phi_U(\theta) = \E{e^{\theta U}}.
\label{eq:mgf:U:def}
\end{align}
Then, the \textit{m.g.f.\ bound} is given by 
\begin{align}
    \E{W} \leq \frac{1}{\nu} \:, 
    \label{eq:kingman-bound:nu}
\end{align}
where $\nu \triangleq \sup \left\{ \theta>0 \middle| \phi_U(\theta)<1 \right\}$.


\section{Delay and AoI of Variable-to-Fixed Codes}
\label{s:AoI-V2F}

As was stated in \secref{s:model}, a memoryless variable-to-fixed binary encoder
is a function that encodes a block of symbols of varying length $B$ which depends on the symbol values, 
to a binary codeword of fixed length $\ell$.
The coding rate \eqref{eq:code-rate}
specializes, therefore, to
\begin{align}
\label{eq:v2f:code-rate}
    \rcode = \frac{\ell}{\E{B}}.
\end{align}

From a queuing perspective, we can think of the noiseless channel as a queue with the encoder inserting clients with a fixed service time 
\begin{align}
    S_i \equiv \frac{\ell}{\rchannel},
\label{eq:v2f_service_time}
\end{align} 
and an inter-arrival time of
\begin{align}
    T_i = \sum_{n=1}^{B_i} D_n.
    \label{eq:v2f_inter_arrival_time}
\end{align}
Note that $\{D_n\}_{n=1}^\infty, \{B_i\}_{i=1}^\infty$ are independent processes. This will be used later in our analysis.

By substituting \eqref{eq:def:U}, \eqref{eq:v2f:code-rate}, \eqref{eq:v2f_service_time}, and \eqref{eq:v2f_inter_arrival_time} in \eqref{eq:queue_stability} we arrive at the following stability condition: The queue is stable iff
\begin{align}
    \rcode < \frac{\rchannel}{q},
    \label{eq:system_stability}
\end{align}
where $\E{T} = \E{B}/q$ by Wald's identity \cite{wald1944cumulative}.

\begin{remark}
    This result is similar to the one-to-variable scheme ($b = 1$ and random $L_i$)
    of
    \cite{ZhongYatesSoljanin:Huffman_for_AoI:ITW2018}, where the code rate \eqref{eq:code-rate}, 
    which specializes to $\rcode = \E{L}$ for that scheme, 
    was to be kept smaller than $\rchannel/q$ to maintain stability.
\end{remark}

Furthermore, because the encoder is lossless, the code rate satisfies $H(X) \leq \rcode$, where $H(X)$ denotes the source entropy \cite[Ch.~5.3]{CoverBook2Edition}.
Consequently, $\rchannel/q$ of a stabilizable system must be bounded from below by the source entropy 
\begin{align}
    H(X) < \frac{\rchannel}{q},
    \label{eq:fundamental_stability}
\end{align}
which leads to the following lemma.

\begin{lemma}
    The AoI and delay are bounded iff \eqref{eq:fundamental_stability} holds.
\end{lemma}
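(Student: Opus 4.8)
The plan is to identify boundedness of the AoI and the delay with stability of the GI/GI/1 queue induced by the code, and then to couple the stability criterion \eqref{eq:system_stability} with the lossless-coding lower bound $H(X)\le\rcode$ \cite[Ch.~5.3]{CoverBook2Edition}.

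First I would make the reduction precise. Let $N_i$ be the index of the last source symbol placed in block $i$; block $i$ enters the FIFO at time $A_{N_i}$, waits $W_i$, and holds the server for $S_i\equiv\ell/\rchannel$, so $R_i=A_{N_i}+W_i+S_i$. Using $\Gamma_i=R_i-A_{N_{i-1}}$ and the definition \eqref{eq:v2f_inter_arrival_time} of $T_i$,
\begin{align}
    \Gamma_i = T_i + W_i + S_i ,
\end{align}
and, since $A_{N_{i-1}}\le A_n\le A_{N_i}$ for any symbol $n$ in block $i$, one also gets the sandwich $W_i+S_i\le\Delta_n\le\Gamma_i$. The service time is constant and $\E{T}=\E{B}/q<\infty$ by Wald's identity \cite{wald1944cumulative}, so a renewal--reward computation of \eqref{eq:mean-aoi}--\eqref{eq:mean-delay} reduces the lemma to: \emph{the queue is stable iff \eqref{eq:fundamental_stability}}. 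Two auxiliary facts close the reduction. (a) If $\E{U}\ge0$, unrolling the Lindley recursion \eqref{eq:def:waiting_time} gives $W_i\ge\sum_{k=1}^{i-1}U_k$, which tends to $+\infty$ a.s.\ when $\E{U}>0$ (strong law) and is unbounded in distribution when $\E{U}=0$ (recurrence of a zero-mean random walk); either way $\bar{\Gamma}=\infty$. (b) If $\E{U}<0$, the low-moment bound \eqref{eq:kingman-bound:basic} gives $\E{W}\le\var{U}/(-\E{U})<\infty$; here $\var{U}<\infty$ is automatic, because any variable-to-fixed code of the form in \secref{s:model} has a finite parsing set $\cI$ and hence a deterministically bounded block length $B\le\max_{w\in\cI}|w|$, so that $T=\sum_{n=1}^{B}D_n$ has all moments finite.

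For the ``only if'' direction I would assume $\bar{\Gamma},\bar{\Delta}<\infty$; by the reduction the queue is stable, so $\E{U}<0$ by \eqref{eq:queue_stability}, which is exactly \eqref{eq:system_stability}, and chaining with $H(X)\le\rcode$ gives $H(X)<\rchannel/q$. For the ``if'' direction I would assume \eqref{eq:fundamental_stability} and invoke the achievability part of variable-to-fixed source coding --- concretely the Tunstall code \cite{TunstallPhD}, whose per-symbol rate $\ell/\E{B}$ tends to $H(X)$ as $\ell\to\infty$ --- to fix an $\ell$ for which $\rcode<\rchannel/q$; then the queue is stable by \eqref{eq:system_stability} and, by the reduction, $\bar{\Gamma},\bar{\Delta}<\infty$.

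The step I expect to be the main obstacle is the reduction of the second paragraph: turning ``bounded AoI/delay'' into ``stable queue'' rigorously needs the exact bookkeeping $\Gamma_i=T_i+W_i+S_i$ with the matching two-sided bound on $\Delta_n$, a careful treatment of the critically-loaded case $\E{U}=0$ (where $W_i$ has no proper limit yet still escapes to infinity), and the identification of the time averages \eqref{eq:mean-aoi}--\eqref{eq:mean-delay} with the stationary expectations. The source-coding ingredient --- that variable-to-fixed codes approach $H(X)$ --- is classical and can simply be quoted.
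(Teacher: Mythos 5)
Your proposal is correct and follows the same route the paper takes: for necessity, combine the queue-stability condition $\rcode<\rchannel/q$ with the lossless-coding lower bound $H(X)\le\rcode$; for sufficiency, invoke Tunstall codes, whose rate $\ell/\E{B}\to H(X)$, to pick an $\ell$ that stabilizes the queue. The paper states this proof in two sentences and implicitly identifies ``bounded AoI/delay'' with ``stable queue,'' whereas you make that identification explicit via the decomposition $\Gamma_i=T_i+W_i+S_i$, the sandwich $W_i+S_i\le\Delta_n\le\Gamma_i$, the critically-loaded case $\E{U}=0$, and the finiteness of $\var{U}$ from the bounded dictionary --- useful supporting detail, but not a different argument.
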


\begin{proof}
    The necessity of \eqref{eq:fundamental_stability} follows from \eqref{eq:system_stability} and the aforementioned bound $H(X) \leq \rcode$. To prove the sufficiency, consider a sequence of 
    Tunstall codes with increasing blocklength $\ell$. The rate of this sequence is known to converge to the entropy of the source \cite{TunstallPhD}:
    \begin{align}
        \lim_{\ell \to \infty} \rcode = H(X).
    \end{align}
    Therefore, \eqref{eq:system_stability} is satisfied for a Tunstall code with a sufficiently large $\ell \in \nats$. 
\end{proof}

We now return to the system objectives described in \secref{s:model}. 
To that end, we describe the delay and the AoI, each, as the sum of three known quantities, which are defined next.
    
\textit{Tarry time:} The tarry time of the $\ell$-th symbol, $J_\ell$, is defined as the amount of time it waits to be coded.
To inject bits into the channel, the encoder has to accumulate $B$ symbols. Consequently, $J_B = 0$, whereas for other symbols:
\begin{align}
    J_{\ell} &= \sum_{n=\ell+1}^B D_n, & \ell \in \{1, \ldots, B-1\}.
\label{eq:tarry:symbol}
\end{align}
The mean tarry time, $\oJ$, is equal to
\begin{subequations}
\label{eq:tarry}
\noeqref{eq:tarry:def}
\begin{align}
    \oJ 
    &\triangleq \limsup_{k \to \infty} \frac{\sum_{n=1}^k J_n}{k}
\label{eq:tarry:def}
 \\ &= \lim_{m \to \infty} \frac{\frac{1}{m} \sum_{i=1}^m \sum_{\ell=1}^{B_i} J_\ell}{\frac{1}{m} \sum_{i=1}^m B_i}
 \label{eq:tarry:blocks}
\\ &= \frac{\E{\sum_{\ell=1}^B J_{\ell}}}{\E{B}}
\label{eq:tarry:ergodicity}
 \\ & = \frac{\E{\sum_{n=2}^{B} (n-1) D_n}}{\E{B}}
\label{eq:tarry:DTOA}
 \\ &= \frac{\E{B^2} - \E{B}}{2q \E{B}} ,
\label{eq:tarry:wald}
\end{align}
\end{subequations}
where \eqref{eq:tarry:blocks} is due to the encoder construction (recall \secref{s:model}), 
\eqref{eq:tarry:ergodicity} follows from ergodicity by recalling that $\{B_i\}_{i=1}^\infty$ and $\{D_n\}_{n=1}^\infty$ are i.i.d.\ and independent of each other,
\eqref{eq:tarry:DTOA} follows from \eqref{eq:tarry:symbol} and exchange of order of summation, 
and \eqref{eq:tarry:wald} follows from Wald's identity and the sum of an arithmetic series. 

\textit{Inter-arrival time:} The inter-arrival time is given by  \eqref{eq:v2f_inter_arrival_time}.

\textit{Service time:} The service time is fixed and is given by \eqref{eq:v2f_service_time}.

\textit{Waiting time:} The waiting time is defined as 
the time that passes 
between a codeword entering the FIFO and the beginning of its service as described in  \secref{s:background}. Bounds on the mean waiting time are provided \secref{s:waiting time}.

The mean delay \eqref{eq:mean-delay} of the system can now be expressed as the sum of the means of three delay elements
\begin{align}
    \bar{\Delta} &= \oJ + \E{S} + \E{W}.
    \label{eq: delay components}
\end{align}
The mean peak AoI \eqref{eq:mean-aoi} can be expressed similarly as 
\begin{align}
    \bar{\Gamma} &= \E{T} + \E{S} + \E{W}.
    \label{eq: AoI components}
\end{align}

\subsection{Analysis of the Mean Waiting Time}
\label{s:waiting time}

As was stated in \secref{s:background}, obtaining an analytic expression for the waiting time is difficult, in general.
Instead, we evaluate the two upper bounds of \secref{s:background} for the suggested scheme.

To derive low-moment bound \eqref{eq:kingman-bound:basic}, we first derive an expression for $U$ by substituting 
\eqref{eq:v2f_service_time} and 
\eqref{eq:v2f_inter_arrival_time} in 
\eqref{eq:def:U}:
\begin{align}
    U = \sum_{n=1}^B D_n - \frac{\ell}{\rchannel}.
\end{align}
Thus, the low-moment bound \eqref{eq:kingman-bound:basic} specializes to
\begin{align}
    \E{W} &\leq \frac{\var{B} + (1-q)\E{B}}{q(\E{B}-q\frac{\ell}{\rchannel})} ,
    \label{eq:kingman-bound:basic:v2f:Wald}
\end{align}
with the variance and mean of $T$ given by Wald's identity.

Next, we derive the m.g.f. bound \eqref{eq:kingman-bound:nu}.
To that end, we start by deriving the m.g.f.\ of $U$ \eqref{eq:mgf:U:def}:
\begin{align}
\begin{aligned}
    \phi_U(\theta) 
    &= \E{e^{\theta\left( \frac{\ell}{\rchannel}-\sum_{n=1}^B D_n \right)}} 
 \\ &= e^{\frac{\ell}{\rchannel}\theta} \E{\left(\frac{qe^{-\theta}}{1-(1-q)e^{-\theta}}\right)^B} ,
\end{aligned}
\label{eq:kingman-bound:nu:v2f:substitute:smoothing}
\end{align}
where the second equality follows from the law of total expectation, the fact that $\{D_n\}_{n=1}^\infty$ are i.i.d.\ according to a geometric distribution with parameter $q$ and are independent of $B$, 
and by substituting the m.g.f.\ of $D_n$---$\phi_D(\theta) \triangleq \E{e^{\theta D_n}} = \frac{qe^{\theta}}{1-(1-q)e^{\theta}}$.
Because $\phi_U(\theta)$ is a continuous function, the bound can be found by solving the equation 
\begin{align}
& \phi_U(\theta) = 1,
\end{align}
which, by \eqref{eq:kingman-bound:nu:v2f:substitute:smoothing}, can be rewritten as
\begin{align}
&\sum_{b=1}^{b_{\max}}\left(\frac{q \e^{-\theta}}{1-(1-q) \e^{-\theta}}\right)^b P(B=b) = \e^{-\frac{\ell}{\rchannel}\theta},
\end{align}
or, equivalently, as
\begin{align} 
    & \left({1-(1-q) \e^{-\theta}}\right)^{b_{\max}} \e^{-\theta \frac{\ell}{\rchannel}}
\\* &\qquad -\sum_{b=1}^{b_{\max}}P(B=b) \left( {1-(1-q) \e^{-\theta}} \right)^{b_{\max} - b } q \e^{-\theta b}  = 0,
\end{align}
with $b_{\max}$ being the maximal value that $B$ can attain with a non-zero probability. 
The solutions for this equation can be found using an appropriate root finding algorithm, \eg, the Newton--Raphson algorithm \cite{gil2007numerical}.\footnote{If $\frac{\ell}{\rchannel}$ is integer, the equation can be solved by substituting $z=e^{-\theta}$ and solving the resulting polynomial equation.} Once the solutions are found, the bound is derived by taking the minimum of the real solutions, and substituting it in \eqref{eq:kingman-bound:nu}.\footnote{The bound is guaranteed to exist since $\phi_U(0)=1$. In the case that this is the only real solution, the bound trivializes to $\infty$.}


\section{Simulation Study}
\label{s:numeric}

In this section, we simulate variable-to-fixed Tunstall codes and compare their performance to those of the optimized fixed-to-variable codes of \cite{zhong2016timeliness}. 

 To that end, we generated several simulations of $2\times10^6$ i.i.d.\ Bernoulli distributed source samples of $\{X_n | n= 1, \ldots, 2\times10^6\}$, each simulation with a different parameter $p \triangleq \Pr(X_i=1) \in (0,1/2]$, corresponding to a different source entropy. 
 Note that an increase in the entropy corresponds to an increase in traffic since a higher (average) rate is required to describe a source sample. The probability of arrival in all simulations was set to $q=1/2$.

To encode the source, we used an off-the-shelf Tunstall code for each parameter $p$, which minimizes the code rate $\rcode$ \eqref{eq:v2f:code-rate} for a given $\ell$; note that the queue is stabilizable with some variable-to-fixed code with a given $\ell$ iff it is stabilizable with a Tunstall code with this $\ell$ for the given distribution. The output length was set to $\ell = 4$, which is the minimal length for the chosen region of $p$ that satisfies \eqref{eq:system_stability}, and therefore minimizes the service time.

In order to test the performance of the waiting time bounds, we compare between the empiric results and the analytic expression of the delay given in \eqref{eq: delay components}, with the waiting time bounded as described in \secref{s:waiting time}.

We compare our results to a fixed-to-variable scheme adapted from a work by Zhong \etal~\cite{zhong2016timeliness}. To that end, we derive a bound for the waiting time using \eqref{eq:kingman-bound:basic}
\begin{align}
    \E{W} \leq \frac{\var{L}/\rchannel^2+(1-q)b/q^2}{b/q-\E{L}/\rchannel}.
\end{align}
 The mean delay is then bounded by 
\begin{align}
    \bar{\Delta} \leq \frac{\var{L}/\rchannel^2+(1-q)b/q^2}{b/q-\E{L}/\rchannel} + \frac{\E{L}}{\rchannel} + \frac{b-1}{2q}.
\end{align}
The mean peak AoI is bounded similarly by 
\begin{align}
    \bar{\Gamma} \leq \frac{\var{L}/\rchannel^2+(1-q)b/q^2}{b/q-\E{L}/\rchannel} + \frac{\E{L}}{\rchannel} + \frac{b}{q}.
\end{align}
Note that a fixed-to-variable code that minimizes each of the two bounds will necessarily minimize the other as they differ only by a constant. Such a fixed-to-variable code was found using Larmore's convex hull algorithm \cite{larmore1989minimum}. 
The block size was set to $b=4$, which is the minimal block size that maintains stability for the chosen region of $p$ (as per \cite{zhong2016timeliness}).

The mean delay $\bar{\Delta}$ \eqref{eq:mean-delay} and mean peak AoI $\bar{\Gamma}$ \eqref{eq:mean-aoi} versus $H(X)$ of each scheme are depicted in \figref{fig: random arrivals delay and aoi}, along with the analytical upper bounds of \secref{s:AoI-V2F}.

As we can see, the variable-to-fixed Tunstall code outperforms the fixed-to-variable coding scheme in the high-traffic region, while the fixed-to-variable scheme offers better results in the low-traffic region. 
Furthermore, we observe that the low-moment bound, while easier to obtain than the m.g.f.\ bound, is less tight, especially in the high-traffic region. 
\begin{figure}[t]
    \centering
    \includegraphics[width=\columnwidth]{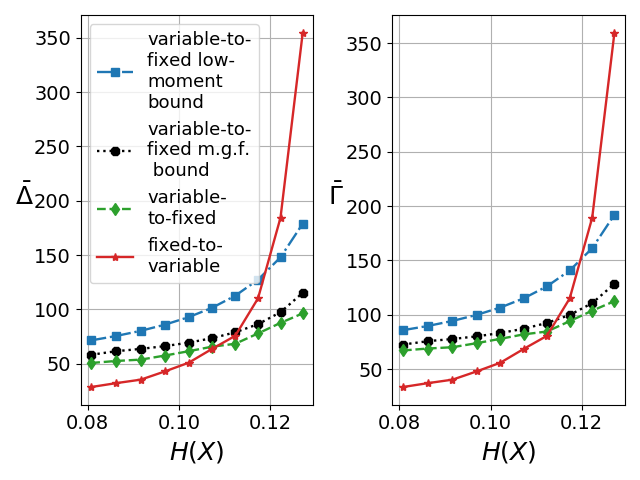}
    \caption{Mean delay and mean peak AoI as a function of source entropy for random arrivals.
    $\rchannel = 1/6.5$. fixed-to-variable block size $b=4$. variable-to-fixed output length $\ell = 4$.}
    \label{fig: random arrivals delay and aoi}
\end{figure}


\section{Discussion and Future Work}
\label{s:summary}

We put forward variable-to-fixed coding for real-time source coding as an alternative to  the hitherto used fixed-to-variable coding.
We analyze our proposed method and demonstrate that it outperforms its fixed-to-variable counterparts when traffic is high both in terms of delay and AoI.
We suggest the following topics for future work which are currently under investigation:
\begin{itemize}\addtolength\itemsep{.25\baselineskip}
\item 
    We have used off-the-shelf Tunstall codes and demonstrated that even such codes outperform optimized fixed-to-variable codes when traffic is high (large $H(X)$). However, although these codes are optimal in terms of stabilizability (\cf\ Huffman codes in fixed-to-variable coding) they are not necessarily optimal in terms of minimum mean delay (\cf\ optimized codes of Zhong \etal~\cite{zhong2017backlog}). Designing minimum mean delay codes is an interesting research avenue.
\item 
    Extension of the problem setup to allow several symbols arriving at the same time instant.
\item
    As suggested by \secref{s:numeric}, variable-to-fixed codes outperform fixed-to-variable codes for some parameters, and vice versa.
    Designing variable-to-variable coding schemes that 
    outperform both of these classes of codes holds promise. This approach was explored in \cite{zhong2017backlog} where an adaptive choice of the block size was investigated for the simpler setting of periodic (deterministic) arrivals. 
\item 
    Throughout this work, an implicit indicator signal was assumed available at the decoder that states whether the FIFO queue is empty or not.
    Studying this problem in the absence of such an indicator signal, in which case the event of not enough (or none at all) source symbols to encode needs to be taken into account (and encoded). 
    This has been done for fixed-to-variable coding in \cite{ZhongYatesSoljanin:Huffman_for_AoI:ITW2018}.
\item 
    Formulation of the m.g.f.\ waiting-time bound for fixed-to-variable codes to use in lieu of the low-moment bound for the design of optimal fixed-to-variable codes.
\end{itemize}


\section{Acknowledgments}
\label{s:acknowledgements}
We thank Emina Soljanin and Jing Zhong for supplying the code generating the figures in \cite{ZhongYatesSoljanin:Huffman_for_AoI:ITW2018} and for helpful discussions.

\bibliographystyle{IEEEtran}
\bibliography{toly, uabend}

\end{document}